\documentclass[conference]{IEEEtran}
\usepackage{moreverb}
\usepackage{epsfig}
\usepackage{amsmath,amssymb,amsthm,mathrsfs,amsfonts,dsfont}
\usepackage{adjustbox,lipsum}
\usepackage{algorithm,algorithmic}
\usepackage{amsfonts}
\usepackage{epsfig}
\usepackage{amssymb}
\usepackage{amsmath}
\usepackage{amsthm}
\usepackage{multirow}
\usepackage{setspace}
\usepackage{rotating}
\usepackage{graphicx}
\usepackage{tabularx}
\usepackage{array}
\usepackage{anyfontsize}
\usepackage{color,soul}
\usepackage{graphicx,dblfloatfix}
\usepackage{epstopdf}
\usepackage{blindtext}
\usepackage{amsmath}
\usepackage{amsthm,amssymb,amsmath,bm}
\usepackage{subfigure}
\usepackage{subcaption}
\usepackage{amsfonts}
\usepackage{epsfig}
\usepackage{amssymb}
\usepackage{amsmath}
\usepackage{cite}
\hyphenation{op-tical net-works semi-conduc-tor}
\usepackage{graphicx}
\usepackage{fancyhdr}
\usepackage[font={small}]{caption}
\usepackage{tabularx}
\usepackage{tcolorbox}
\usepackage{cite}
\usepackage{setspace}
\usepackage[a4paper,
            bindingoffset=0.2in,
            left=0.5in,
            right=0.5in,
            top=1.15in,
            bottom=1.1in,
            footskip=.2in
            ]{geometry}

\newtheorem{theorem}{Theorem}
\newtheorem{lemm}{Lemma}

\newtheorem{corollary}{Corollary}

\newtheorem{rem}{Remark}
\newtheorem{Pro}{Proposition}



\newcommand{\E}[1]{\mathbb{E}[#1]}
\newcommand{\nn}{\nonumber\\}
\newcommand{\Dbar}{\bar{D}}

\sloppy
\allowdisplaybreaks
\setlength{\abovedisplayskip}{8pt}
\setlength{\belowdisplayskip}{8pt}

\begin{document}

\title{
Multi-Source M/G/1/1 Queues with Probabilistic Preemption
}
 \author{
\IEEEauthorblockN{Mohammad~Moltafet\textsuperscript{*}, Hamid R. Sadjadpour\textsuperscript{*}, Zouheir~Rezki\textsuperscript{*}, Marian~Codreanu\textsuperscript{†}, and Roy~D.~Yates\textsuperscript{††}
 \\
\textsuperscript{*}Department of ECE, University of California Santa Cruz, USA 
(\{mmoltafe, hamid, zrezki\}@ucsc.edu)
\\
\textsuperscript{†}Department of Science and Technology, Link\"{o}ping University, Sweden (marian.codreanu@liu.se)
\\
\textsuperscript{††}Department of ECE, Rutgers University, USA (ryates@winlab.rutgers.edu)
}
}

\maketitle
\begin{abstract}
We consider a multi-source status update system consisting of multiple independent sources, a single server, and a single sink. Each source generates packets according to a Poisson process, and packets are served according to a general service time distribution. The system has a capacity of one packet, i.e., no waiting buffer, and is modeled as a multi-source M/G/1/1 queueing system. We introduce a probabilistically preemptive packet management policy, under which an existing packet from the same source in the system is replaced by an arriving packet with a fixed probability. We derive the moment generating functions (MGFs) of the age of information (AoI) and peak AoI (PAoI) for each source under this policy. Numerical results demonstrate the effectiveness of the proposed packet management policy.

\emph{Index Terms--}  Status update systems, AoI, moment generating function (MGF), multi-source queueing model. 
\end{abstract}	
\section{Introduction}\label{Introduction}
The authors of \cite{6195689} introduced the age of information (AoI) as a destination-centric metric to quantify information freshness in status update systems. The AoI represents the difference between the current time and the time stamp of the most recently received sample from the monitored process. The peak AoI (PAoI) was later proposed in \cite{6875100} as an alternative metric for evaluating information freshness, measuring the value of the AoI immediately before packet delivery.

This paper considers a multi-source status update system with a single server and no waiting buffer. The packets of each source are generated according to a Poisson process, and each packet is served according to a general service time distribution. We derive the moment generating functions (MGFs) of the AoI and PAoI under a probabilistically preemptive policy. Under this policy, when a packet arrives, an existing packet from the \textit{same source} in the system is preempted by the arriving packet with a fixed probability.

Using the MGFs of the AoI, the average AoI of a two-source queueing system is analyzed in the numerical results section. The results indicate that, depending on system parameters such as packet arrival rates and service time distribution parameters, the proposed probabilistically preemptive packet management policy can outperform the self-preemptive \cite{9869867}, globally preemptive \cite{8406928}, and non-preemptive \cite{9500775} policies. Moreover, the results demonstrate that when a preemption mechanism is available, the system must learn when to apply it to achieve improved performance.

\subsection{Related Work}
The work in \cite{6195689} derived the average AoI for M/M/1, D/M/1, and M/D/1 first-come first-served (FCFS) queueing models.  
The authors of \cite{6284003} investigated the AoI in a multi-source setup where they studied the average AoI in a multi-source M/M/1 FCFS queueing model.
As shown in the initial works \cite{6310931,7415972}, applying an appropriate packet management policy in status update systems -- in the waiting queue or/and server -- has a substantial potential to improve information freshness. 
The performance of various packet management policies in queueing systems with exponentially distributed service times and Poisson arrivals has been extensively studied \cite{8469047,8437591,8406966,8437907,9013935,9048914,9252168,9162681,Moltafet2020mgf,9611498,9705518}.

Beyond the setting of exponentially distributed service times and Poisson arrivals, AoI has been studied under a variety of arrival processes and service time distributions in both single- and multi-source systems.
 The distributions of the AoI and PAoI for the \textit{single-source} PH/PH/1/1 and M/PH/1/2 queueing models were derived in \cite{9119460}.
The average AoI of a single-source D/G/1 FCFS queueing model was derived in \cite{8406909}. A closed-form expression for the average AoI of a single-source M/G/1/1 preemptive queueing model with hybrid automatic repeat request was provided in \cite{8006504}. The AoI and PAoI distributions for single-source M/G/1/1 and G/M/1/1 queueing models were obtained in \cite{8006592}. A general formula for the AoI distribution in single-source, single-server queueing systems was derived in \cite{8820073}.
The average AoI and PAoI of a single-source status-update system with Poisson arrivals and gamma-distributed service times under a last-come first-served (LCFS) packet management policy were analyzed in \cite{7541764}.
The average AoI of a single-source G/G/1/1 queueing model was studied in \cite{9048933}. The AoI distribution of a generate-at-will single-source dual-server system was derived in \cite{10899900}, where the servers were assumed to have exponentially distributed service times and the sampling/transmission process was frozen for an Erlang-distributed duration after each transmission. The MGFs of the AoI and PAoI for a single-source M/G/1/1 model under a probabilistically preemptive policy were derived in \cite{moltafetisit2025}.

The average AoI and PAoI in a \textit{multi-source} M/G/1 FCFS queueing model were studied in \cite{9099557,inoue2024exact}. The average AoI of a queueing system with two classes of Poisson arrivals with different priorities under a general service time distribution was analyzed in \cite{8886357}. The average AoI and PAoI of a multi-source M/G/1/1 queueing model under the globally preemptive packet management policy were derived in \cite{8406928}. Under this policy, a newly arriving packet preempts any packet in service, regardless of the source index.
The distributions of the AoI and PAoI for a generate-at-will multi-source system with a general phase-type service time distribution were derived in \cite{10139823}. The average AoI and PAoI of a multi-source M/G/1/1 queueing model under the non-preemptive policy were obtained in \cite{9500775}. Under this policy, any arriving packet is discarded when the server is busy, regardless of its source index.  
 The authors of \cite{9519697} considered a multi-source system with Poisson arrivals, where the server serves packets according to a phase-type distribution, i.e., a multi-source M/PH/1/1 queueing system. Using the theory of Markov fluid queues, they proposed a method to numerically obtain the distributions of the AoI and PAoI under a probabilistically preemptive policy. Under this policy, a newly arriving packet from source $c$ can preempt a packet from source $c'$ in service with a probability that depends on $c$ and $c'$. The MGFs of the AoI and PAoI of a multi-source M/G/1/1 queueing model under the self-preemptive policy were derived in \cite{9869867}, where a newly arriving packet preempts a packet in service only if they have the same source index. In addition, the authors of \cite{9869867} obtained the MGFs of the AoI and PAoI for the models studied in \cite{9500775} and \cite{9869867}.
The Laplace-Stieltjes transform of the AoI for a two-source system with Poisson arrivals and a generally distributed service time was derived in \cite{10038591}. The authors assumed that each source has its own buffer and studied three versions of the self-preemptive policy.

\subsection{Organization}
In Section~\ref{System Model and Summary of Results}, we present the system model and summarize the main results. The derivation of the MGFs of the AoI and PAoI is provided in Section~\ref{Calculation of the MGF of the (peak) AoI}. Numerical results are presented in Section~\ref{Numerical Results}, and the paper is concluded in Section~\ref{Conclusions}.

\section{System Model and Main Results}\label{System Model and Summary of Results}

We consider a status update system consisting of $C$ independent sources, denoted by the set $\mathcal{C} = \{1, \dots, C\}$, a single server, and a single sink, as illustrated in Fig.~\ref{Model}. Each source generates status update packets about a random process and sends them to the sink. Each packet contains the measured value of the monitored process and a time stamp representing the generation time of the sample. We assume that packets from source $c \in \mathcal{C}$ are generated according to a Poisson process with rate $\lambda_c$. Since the sources generate packets independently, the overall packet generation in the system follows a Poisson process with rate $\lambda = \sum_{c' \in \mathcal{C}} \lambda_{c'}$. 

The server serves packets according to a general service time distribution. More specifically, we assume that packets have i.i.d. service times with probability density function (PDF) $f_U(\cdot)$ and moment generating function (MGF) $M_U(s) = \mathbb{E}[e^{sU}]$. Finally, the system has a capacity of one (i.e., no waiting buffer), making it a multi-source M/G/1/1 queueing system. Next, we describe the packet management policy.

\begin{figure}
\centering
\includegraphics[width=.6\linewidth,trim = 0mm 0mm 0mm 0mm,clip]{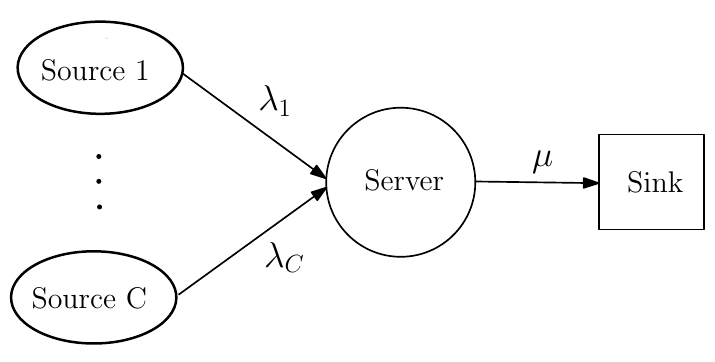}
\caption{The considered status update system.}  
\label{Model}
\vspace{-5mm}		
\end{figure}

\textbf{Probabilistically Preemptive Policy:} Under this policy, when the server is idle and a packet arrives, the arriving packet immediately enters service. When the server is busy and a new packet arrives, an existing packet from the \textit{same source} in the system is preempted by the arriving packet with probability $\theta$. If the arriving packet finds a packet from a different source in service, it is discarded.

\subsection{AoI Definition}

For each source, the AoI at the sink is defined as the time elapsed since the most recently received packet was generated. Formally, let $t_{c,i}$ denote the generation time of the $i$th successfully delivered status update packet from source $c$, and let $t'_{c,i}$ denote the time at which this packet arrives at the sink.



At time instant $\tau$, the index of the most recently received packet from source $c$ is given by
$
{N_c(\tau) = \max\{i' \mid t'_{c,i'} \le \tau\},}
$
and the generation time of this packet is
$
\xi_c(\tau) = t_{c, N_c(\tau)}.
$
The AoI of source $c$ at the sink is then defined as the stochastic process
$
\delta_c(t) = t - \xi_c(t).
$

Let the random variable
$Y_{c,i} = t'_{c,i} - t'_{c,i-1}$
denote the $i$th interdeparture time of source $c$, i.e., the time elapsed between the departures of the $(i-1)$th and $i$th successfully delivered packets from source $c$. Hereafter, we refer to the $i$th delivered packet from source $c$ simply as ``packet $c,i$''. 
Moreover, let the random variable
$T_{c,i} = t'_{c,i} - t_{c,i}$
denote the system time of packet $c,i$, i.e., the total time that the $i$th delivered packet spends in the system.

One of the most commonly used metrics for evaluating the AoI is the PAoI \cite{6875100}. The PAoI of source $c$ is defined as the value of the AoI immediately before a new update packet is received. Accordingly, the PAoI associated with packet $c,i$, denoted by $A_{c,i}$, is given by
\begin{align}\label{AoI.eq}
A_{c,i} = Y_{c,i} + T_{c,i-1}.
\end{align}

We assume that the considered status update system is stationary and that the AoI process for each source is ergodic. Thus, we have $T_{c,i} =^{\mathrm{st}}T_c$, $Y_{c,i} =^{\mathrm{st}}Y_c$, and $A_{c,i} =^{\mathrm{st}}A_c$ for all $i$, where $=^{\mathrm{st}}$ denotes stochastic equivalence, i.e., the variables have identical marginal distributions. 

We present the main results of this paper in the following theorem.


\begin{theorem}\label{T_source-aware}
The MGFs of the AoI and PAoI for source $c$ under the probabilistically preemptive packet management policy, denoted by $M_{\delta_c}(s)$ and $M_{A_c}(s)$, respectively, are given by
\begin{align}
&{M_{\delta_c}(s)}=\dfrac{M_{T_c}(s)(M_{Y_c}(s)-1)}{s\bar{Y}_c},\\&
M_{A_c}(s)=M_{T_c}(s)M_{Y_c}(s),
\end{align}
where $M_{T_c}(s)$ is the MGF of the system time $T_c$ of source $c$, which is given by 
\begin{align}\label{STMGF}
M_{T_c}(s)=\dfrac{M_U(s-\theta\lambda_c)}{M_{U}(-\theta\lambda_c)},
\end{align}
 $M_{Y_c}(s)$ is the MGF of the interdeparture time $Y_c$ of source $c$, which is given by 
\begin{align}\label{mgfinterdeparty0}
M_{Y_c}(s)=\dfrac{a_c}{(1-a'_c)\left(1-\sum_{c'\in\mathcal{C}\setminus\{c\}}\dfrac{a_{c'}}{1-a'_{c'}}\right)},
\end{align}
where $a_c=\dfrac{\lambda_cM_U(s-\theta\lambda_c)}{\lambda-s}$, $a'_c=\dfrac{\theta\lambda_c(1-M_U(s-\theta\lambda_c))}{\theta\lambda_c-s}$, and $\bar{Y}_c$ is the mean of $Y_c$ which is derived by calculating the first derivative of the MGF of $Y_c$, evaluated at $s=0$, i.e.,
$
\bar{Y}_c=\dfrac{\mathrm{d}(M_{Y_c}(s))}{\mathrm{d}s}\Big|_{s=0}.
$
\end{theorem}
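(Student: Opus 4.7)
The plan is to derive the four formulas in order, starting from $M_{T_c}$ and building up to $M_{Y_c}$, $M_{A_c}$, and finally $M_{\delta_c}$. For the system-time MGF \eqref{STMGF}, I will use that, by Poisson thinning, the arrivals of the same source that actually trigger a preemption form a Poisson process of rate $\theta\lambda_c$. A packet in service is therefore delivered iff its service time $U$ is smaller than an independent $T_p\sim\mathrm{Exp}(\theta\lambda_c)$. Hence $T_c$ has the conditional distribution of $U$ given $\{U<T_p\}$, and computing $\mathbb{E}[e^{sU}\mathbf{1}_{U<T_p}]=M_U(s-\theta\lambda_c)$ and $\mathbb{P}(U<T_p)=M_U(-\theta\lambda_c)$ gives the stated ratio.

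For the interdeparture-time MGF \eqref{mgfinterdeparty0}, I will condition on the events after the departure instant $t'_{c,i-1}$. Let $X\sim\mathrm{Exp}(\lambda)$ be the time of the next arrival. With probability $\lambda_c/\lambda$ it is a source-$c$ packet, which undergoes a ``delivery cycle'' of length $W_c$ (time from service entry until some same-source copy completes without being preempted), contributing $Y_c=X+W_c$. Otherwise, with probability $\lambda_{c'}/\lambda$, the arrival is from $c'\neq c$ and initiates a busy period of length $B_{c'}$; during this period all source-$c$ arrivals are discarded, after which the server is again idle and, by the memoryless property of the Poisson superposition, the process restarts independently so that $Y_c\stackrel{d}{=}X+B_{c'}+Y_c'$ with $Y_c'$ an independent copy of $Y_c$. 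The key subproblem is deriving $M_{W_c}$ (and analogously $M_{B_{c'}}$) via self-consistency: conditioning on whether the first service attempt completes before the first preempting arrival yields $M_{W_c}(s)=M_U(s-\theta\lambda_c)+\mathbb{E}[e^{sT_p}\mathbf{1}_{U\ge T_p}]\,M_{W_c}(s)$, in which the bracketed expectation evaluates to $a'_c$. Solving this linear equation and substituting into the linear equation for $M_{Y_c}$ yields the stated closed form upon identifying the resulting ratios with $a_c$ and $a'_c$.

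The two higher-level MGFs then follow quickly. For the PAoI, since $Y_{c,i}$ is determined by arrivals and service times in $(t'_{c,i-1},t'_{c,i}]$ while $T_{c,i-1}$ is measurable with respect to events in $[t_{c,i-1},t'_{c,i-1}]$, the independent-increments property of the Poisson arrivals gives $Y_{c,i}\perp T_{c,i-1}$, whence $M_{A_c}(s)=M_{T_c}(s)M_{Y_c}(s)$ follows immediately from \eqref{AoI.eq}. For the AoI, I will use the standard sawtooth representation: on each interval $[t'_{c,i-1},t'_{c,i}]$ of length $Y_{c,i}$ the process rises linearly from $T_{c,i-1}$ to $T_{c,i-1}+Y_{c,i}$, so by the renewal-reward identity $M_{\delta_c}(s)=\mathbb{E}\!\left[\int_0^{Y_c}e^{s(T_c+u)}\,du\right]/\bar{Y}_c$; using $T_c\perp Y_c$ and integrating in $u$ produces $M_{T_c}(s)(M_{Y_c}(s)-1)/(s\bar{Y}_c)$.

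I expect the main obstacle to be the interdeparture-time step, since it requires distinguishing carefully between the ``useful'' busy periods ending in a source-$c$ delivery and the intervening busy periods triggered by other sources, justifying that the source-$c$ arrivals discarded during those intervening periods do not affect the post-idle dynamics so that the recursion is clean, and correctly resolving two nested fixed-point equations without sign errors in the $a_c,a'_c$ bookkeeping. Everything else reduces to standard renewal-reward and conditioning arguments once the $Y_c$ and $T_c$ MGFs are in hand.
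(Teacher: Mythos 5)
Your proposal is correct and arrives at exactly the stated formulas, but it organizes the computation of $M_{Y_c}$ differently from the paper. For $M_{T_c}$ your Poisson-thinning argument (delivery iff $U < T_p$ with $T_p\sim\mathrm{Exp}(\theta\lambda_c)$, then compute the conditional MGF) is the same as the paper's Proposition~\ref{Pro1}. For $M_{A_c}$ and $M_{\delta_c}$, you re-derive from scratch (independence of $T_{c,i-1}$ and $Y_{c,i}$ over disjoint time intervals, plus the renewal--reward integral over the sawtooth) what the paper states in Lemma~\ref{lemmsmgfage} and attributes to \cite{9869867}; the content is identical.

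The real divergence is in the derivation of $M_{Y_c}$. The paper constructs a semi-Markov chain on states $\{q_0,q_1,\dots,q_C,q'_0\}$, computes transition probabilities $p_{c'},\bar p_{c'},p'_{c'}$ and sojourn-time MGFs, and then invokes a graph-theoretic transfer-function lemma (Lemma~\ref{lembro}) to solve a linear system for $H(\bar q_0)$. You instead do first-step analysis directly: condition on the source of the first arrival after a source-$c$ departure, introduce the ``delivery cycle'' length $W_{c'}$ with the one-line fixed point $M_{W_{c'}}(s)=M_U(s-\theta\lambda_{c'})+a'_{c'}M_{W_{c'}}(s)$, and use the regeneration $Y_c\stackrel{d}{=}X+B_{c'}+Y_c'$ when the first arrival is foreign. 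These are the same equations (your $M_X M_{W_{c'}}$ factors, after multiplying by $\lambda_{c'}/\lambda$, coincide with the paper's $a_{c'}/(1-a'_{c'})$ terms), but your route avoids the transfer-function machinery and is more elementary and self-contained; the paper's approach is more systematic and scales transparently to more complicated state diagrams. Both buy you the same closed form. The only point that deserves a sentence of justification in a write-up is the one you already flagged: after a foreign busy period ends, the fact that the discarded source-$c$ arrivals leave no trace and the server returns to an idle state with no backlog (no buffer) is what makes the regeneration exact and $Y_c'$ a genuinely fresh copy.
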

\begin{proof}
    See Section~\ref{Calculation of the MGF of the (peak) AoI}.
\end{proof}

\begin{corollary}\label{Cro_01}
    The $m$th moments of the AoI and PAoI for source $c$, denoted by $\Delta_c^{(m)}$ and  $A_c^{(m)}$, respectively, are given by 
    \begin{align}
       &\Delta_c^{(m)}=\dfrac{\sum_{i=0}^{m+1}{m+1 \choose i}\E{T_c^{i}}\E{Y_c^{m+1-i}}-\E{T_c^{m+1}}}{(m+1)\bar{Y_c}},\\&
       A_c^{(m)}=\sum_{i=0}^{m}{m\choose i}\E{T_c^{i}}\E{Y_c^{m-i}},
    \end{align}
    where the $j$th moment of $T_c$ (resp. $Y_c$)  is derived by evaluating the $j$th derivative of the $M_{T_c}(s)$ in \eqref{STMGF} (resp. $M_{Y_c}(s)$ in \eqref{mgfinterdeparty0}) at $s=0$.
\end{corollary}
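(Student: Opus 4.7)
The plan is to obtain both moment formulas by differentiating the MGFs supplied by Theorem~\ref{T_source-aware} at $s=0$ and applying the general Leibniz rule, $\frac{d^m}{ds^m}[f(s)g(s)] = \sum_{i=0}^{m}\binom{m}{i} f^{(i)}(s) g^{(m-i)}(s)$, together with the defining identity $\mathbb{E}[X^j] = M_X^{(j)}(0)$.

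For the PAoI, the argument is essentially immediate. Since $M_{A_c}(s)=M_{T_c}(s)M_{Y_c}(s)$ is a plain product of two MGFs, I would differentiate $m$ times via Leibniz and evaluate at $s=0$; the derivatives of $M_{T_c}$ and $M_{Y_c}$ at $0$ are exactly $\mathbb{E}[T_c^i]$ and $\mathbb{E}[Y_c^{m-i}]$, producing $A_c^{(m)}=\sum_{i=0}^{m}\binom{m}{i}\mathbb{E}[T_c^{i}]\mathbb{E}[Y_c^{m-i}]$ at once.

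For the AoI, the apparent obstacle is the factor $1/s$ in $M_{\delta_c}(s)=\frac{M_{T_c}(s)(M_{Y_c}(s)-1)}{s\bar{Y}_c}$: one cannot simply set $s=0$ or differentiate a product because the singular factor is removable rather than regular. The key trick I would use is to clear the denominator first, writing
\begin{equation*}
s\,\bar{Y}_c\, M_{\delta_c}(s)=M_{T_c}(s)M_{Y_c}(s)-M_{T_c}(s),
\end{equation*}
and then differentiate $(m+1)$ times with respect to $s$. On the left, Leibniz applied to the factor $s$ collapses the sum to only two terms because $s^{(i)}=0$ for $i\geq 2$, yielding $\bar{Y}_c\bigl[s\,M_{\delta_c}^{(m+1)}(s)+(m+1)M_{\delta_c}^{(m)}(s)\bigr]$; evaluating at $s=0$ leaves exactly $(m+1)\bar{Y}_c\Delta_c^{(m)}$. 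On the right, Leibniz applied to $M_{T_c}(s)M_{Y_c}(s)$ at $s=0$ gives $\sum_{i=0}^{m+1}\binom{m+1}{i}\mathbb{E}[T_c^{i}]\mathbb{E}[Y_c^{m+1-i}]$, while the lone $-M_{T_c}^{(m+1)}(0)$ contributes $-\mathbb{E}[T_c^{m+1}]$. Equating the two sides and dividing by $(m+1)\bar{Y}_c$ yields the claimed formula.

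The one subtlety is confirming that $M_{\delta_c}$ is indeed $(m+1)$-times differentiable at $s=0$, i.e.\ that all relevant moments exist; I would justify this implicitly by the standing assumption that $M_U$, and hence $M_{T_c}$ and $M_{Y_c}$, are analytic in a neighborhood of $0$ (as is already presumed by the statement of Theorem~\ref{T_source-aware}). The final sentence of the corollary is just a restatement of $\mathbb{E}[X^j]=M_X^{(j)}(0)$ applied to the explicit forms of $M_{T_c}(s)$ in \eqref{STMGF} and $M_{Y_c}(s)$ in \eqref{mgfinterdeparty0}, so no further work is needed there.
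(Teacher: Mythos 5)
Your proposal is correct and follows essentially the same route as the paper: clear the $s$ in the denominator, differentiate the resulting identity $(m+1)$ times, use the fact that $s^{(j)}=0$ for $j\ge 2$ to collapse the left-hand Leibniz sum, and evaluate at $s=0$. The only cosmetic difference is that the paper derives $A_c^{(m)}$ directly from $A_c=Y_c+T_c$ via independence and the binomial theorem rather than by applying Leibniz to $M_{T_c}(s)M_{Y_c}(s)$, but these two derivations are interchangeable since the MGF product formula already encodes that independence.
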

\begin{proof}
    See Section~\ref{ProofCro_01}.
\end{proof}

\begin{rem}
The MGF of the (peak) AoI under the probabilistically preemptive policy, presented in Theorem~\ref{T_source-aware}, generalizes the existing results in \cite{9869867} and \cite{moltafetisit2025}. Specifically, by letting $\theta \rightarrow 0$, the MGF of the (peak) AoI reduces to that under the non-preemptive policy derived in \cite{9869867}; by letting $\theta \rightarrow 1$, it reduces to that under the self-preemptive policy derived in \cite{9869867}. Furthermore, by letting $\lambda_{c'} \rightarrow 0$ for all $c' \in \mathcal{C} \setminus \{c\}$, the MGF of the (peak) AoI reduces to that of a single-source system with arrival rate $\lambda_c$ under the probabilistically preemptive policy derived in \cite{moltafetisit2025}.
\end{rem}

\section{Derivation of the MGF of the (Peak) AoI }\label{Calculation of the MGF of the (peak) AoI}
To prove Theorem~\ref{T_source-aware}, we first present Lemma~\ref{lemmsmgfage}, which expresses the MGFs of the AoI and PAoI for source $c$ as functions of the MGF of the system time $T_c$ and the MGF of the interdeparture time $Y_c$ for source $c$.

\begin{lemm}\label{lemmsmgfage}
The MGFs of the AoI and PAoI for source $c$ in a multi-source M/G/1/1 queueing model under the probabilistically preemptive policy can be expressed as
\begin{align}\label{MGFofagegeneral}
&M_{\delta_{c}}(s)=\dfrac{M_{A_c}(s)-M_{T_c}(s)}{s\bar{Y}_c},\\&\label{MGFpeak}
M_{A_c}(s)=M_{T_c}(s)M_{Y_c}(s).  
\end{align}
\end{lemm}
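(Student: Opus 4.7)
The plan is to establish the two identities separately, starting with the peak AoI equation and then obtaining the AoI MGF through a standard sawtooth time-average argument. The key structural observation I would exploit is that, because arrivals at each source are Poisson and the server is empty immediately after any departure, the system probabilistically regenerates at every delivery epoch of source $c$; this allows me to treat successive source-$c$ cycles as an i.i.d.\ sequence.

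For the peak AoI identity, I would begin with the decomposition $A_{c,i} = Y_{c,i} + T_{c,i-1}$ from \eqref{AoI.eq} and argue independence of the two summands. The system time $T_{c,i-1}$ is a function of the history up to the delivery epoch $t'_{c,i-1}$ (the service of packet $c,i-1$ and the absence of a source-$c$ preemption during that service). The interdeparture time $Y_{c,i}$, by contrast, depends only on packet arrivals and service times occurring after $t'_{c,i-1}$, at which moment the server is empty. Memorylessness of the Poisson arrivals together with the i.i.d.\ service times then gives the required independence, and the MGFs multiply to yield $M_{A_c}(s) = M_{T_c}(s) M_{Y_c}(s)$.

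For the AoI identity, I would use the sawtooth sample path: on $[t'_{c,i-1}, t'_{c,i})$ the age satisfies $\delta_c(t) = T_{c,i-1} + (t - t'_{c,i-1})$, so it starts at $T_{c,i-1}$, grows linearly, and reaches $A_{c,i}$ immediately before the next delivery. By the stated ergodicity, the stationary MGF of $\delta_c$ equals the cycle time-average
\begin{align*}
M_{\delta_c}(s)
&= \frac{1}{\bar{Y}_c}\, \mathbb{E}\!\left[\int_{0}^{Y_{c,i}} e^{s(T_{c,i-1}+u)}\, du\right] \\
&= \frac{1}{s\bar{Y}_c}\, \mathbb{E}\!\left[e^{s A_{c,i}} - e^{s T_{c,i-1}}\right],
\end{align*}
which, using the stationarity identities $T_{c,i-1}=^{\mathrm{st}}T_c$ and $A_{c,i}=^{\mathrm{st}}A_c$, gives \eqref{MGFofagegeneral}.

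The main obstacle I expect is the rigorous justification of the independence of $Y_{c,i}$ and $T_{c,i-1}$: although intuitively clear from the empty-server state at $t'_{c,i-1}$ and the memoryless Poisson arrivals, a formal argument requires a regenerative/strong-Markov statement at the delivery epoch to decouple the pre- and post-$t'_{c,i-1}$ randomness. Once that decoupling is in place, the time-average step for the AoI reduces to the classical \emph{area under the sawtooth} calculation and follows directly from the ergodicity assumption.
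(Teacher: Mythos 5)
Your proposal is correct and is precisely the argument the paper invokes by reference: the paper does not spell out a proof but cites \cite[Lemma~1]{9869867}, which uses exactly this decomposition — independence of $T_{c,i-1}$ and $Y_{c,i}$ at the (server-emptying, regenerative) source-$c$ delivery epoch to factor $M_{A_c}(s)$, followed by the renewal-reward/ergodic time-average of $e^{s\delta_c(t)}$ over a sawtooth cycle to get $M_{\delta_c}(s)$. The one ingredient you flag as a potential obstacle (formal decoupling of pre- and post-$t'_{c,i-1}$ randomness) is in fact supplied cleanly here by the M/G/1/1 structure, which the paper itself leans on in the proof of Corollary~1 with the remark ``Since there is no waiting buffer in the system, $T_{c,i-1}$ and $Y_{c,i}$ are independent''; the server is empty at a delivery epoch, Poisson arrivals are memoryless, and service times are i.i.d., so no further strong-Markov machinery is needed.
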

\begin{proof} 
The proof follows similar steps as in \cite[Lemma~1]{9869867}.
\end{proof}

From Lemma~\ref{lemmsmgfage}, the main challenge in calculating the MGF of the (peak) AoI reduces to deriving the MGFs of the system time of source $c$, $M_{T_c}(s)$, and the interdeparture time of source $c$, $M_{Y_c}(s)$. 

Next, we calculate $M_{T_c}(s)$ and $M_{Y_c}(s)$ in Propositions~\ref{Pro1} and \ref{Pro2}, respectively.

\begin{Pro}\label{Pro1}
The MGF of the system time $T_c$ for source $c$ is given by
 \begin{align}\label{mgfsystemtime}
M_{T_c}(s)=\dfrac{M_S(s-\theta\lambda_c)}{L_{\theta\lambda_c}}.
 \end{align}
\end{Pro}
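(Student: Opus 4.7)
The strategy is to recognize $T_c$ as a service time conditioned on successful delivery and to evaluate its MGF by a direct conditioning argument. Once a source-$c$ packet enters service, arrivals from any other source $c'\neq c$ are discarded by the policy and therefore have no effect on the tagged packet; only subsequent arrivals from source $c$ can disrupt it, and each such arrival preempts with probability $\theta$, independently. By the thinning property of the source-$c$ Poisson process, the instants at which a successful preemption attempt occurs during the service of the tagged packet form a Poisson process of rate $\theta\lambda_c$, independent of the service requirement $U$.

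Let $E\sim\mathrm{Exp}(\theta\lambda_c)$ denote the time from the start of service until the first successful preemption attempt, so that $E$ is independent of $U$. The tagged packet is successfully delivered if and only if $U<E$, in which case $T_c=U$. Consequently,
\begin{equation*}
M_{T_c}(s)=\mathbb{E}\bigl[e^{sU}\,\big|\,U<E\bigr]=\frac{\mathbb{E}\bigl[e^{sU}\,\mathbf{1}(U<E)\bigr]}{\mathbb{P}(U<E)}.
\end{equation*}
Conditioning on $U=u$ and using $\mathbb{P}(U<E\mid U=u)=e^{-\theta\lambda_c u}$, the numerator evaluates to $\mathbb{E}[e^{(s-\theta\lambda_c)U}]=M_U(s-\theta\lambda_c)$ and the denominator to $M_U(-\theta\lambda_c)$. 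With the notational identifications $M_S\equiv M_U$ and $L_{\theta\lambda_c}\equiv M_U(-\theta\lambda_c)$, this yields \eqref{mgfsystemtime}.

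\textbf{Main obstacle.} The one step that deserves care is the decoupling argument: the in-service dynamics experienced by the tagged source-$c$ packet depend solely on the thinned source-$c$ arrival process, because the policy discards all arrivals from other sources while the server is busy. Once this reduction is made, the analysis collapses to the single-source probabilistically preemptive M/G/1/1 setting of \cite{moltafetisit2025}, and the remaining MGF manipulation is an elementary integration against the service-time density. I therefore do not anticipate any genuine technical difficulty beyond cleanly formalizing this decoupling.
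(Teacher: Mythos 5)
Your proof is correct and follows essentially the same route as the paper: both rest on the observations that the system time of a delivered source-$c$ packet equals its service time, that preempting source-$c$ arrivals form a thinned Poisson process of rate $\theta\lambda_c$, and that delivery corresponds to the event $\{U<E\}$ with $E\sim\mathrm{Exp}(\theta\lambda_c)$. The only cosmetic difference is that the paper first derives the conditional density $f_{T_c}(t)=f_U(t)e^{-\theta\lambda_c t}/M_U(-\theta\lambda_c)$ and then transforms it, whereas you compute $M_{T_c}(s)=\mathbb{E}[e^{sU}\mid U<E]$ directly; you also correctly read the paper's stray notation $M_S$ and $L_{\theta\lambda_c}$ as $M_U$ and $M_U(-\theta\lambda_c)$, consistent with Theorem~\ref{T_source-aware}.
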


\begin{proof}
See Section~\ref{ProofPro1}.
\end{proof}

The next step is to derive the MGF of the interdeparture time $Y_c$, 
which is carried out as follows.

\begin{Pro}\label{Pro2}
The MGF of the interdeparture time $Y_c$ for source $c$ is given by
\begin{align}\label{mgfinterde1}
M_{Y_c}(s)=\dfrac{a_c}{(1-a'_c)\left(1-\sum_{c'\in\mathcal{C}\setminus\{c\}}\dfrac{a_{c'}}{1-a'_{c'}}\right)},
\end{align}
where ${a_c\!=\!\dfrac{\lambda_cM_U(s\!-\!\theta\lambda_c)}{\lambda\!-\!s}}$, and ${a'_c\!=\!\dfrac{\theta\lambda_c(1\!-\!M_U(s\!-\!\theta\lambda_c))}{\theta\lambda_c-s}}$.
\end{Pro}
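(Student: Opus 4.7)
The plan is to exploit the cycle structure that emerges between two successive successful source-$c$ departures. Immediately after a source-$c$ packet departs, the server is idle; the next arrival occurs after an $\mathrm{Exp}(\lambda)$ waiting time $I$, and it is generated by source $c'$ with probability $\lambda_{c'}/\lambda$ independently of $I$. If $c'=c$, the ensuing busy period $B_c$---possibly containing several same-source preemptions---ends with a source-$c$ delivery, completing $Y_c$; if $c'\neq c$, every source-$c$ arrival during the $c'$-busy period $B_{c'}$ is discarded, the server returns to idle, and a fresh cycle begins. This yields the decomposition
\begin{align*}
Y_c=\sum_{k=1}^{N}(I_k+B_{c_k}),
\end{align*}
in which the cycles $(I_k,c_k,B_{c_k})$ are i.i.d., the first $N-1$ source types $c_1,\dots,c_{N-1}$ lie in $\mathcal{C}\setminus\{c\}$, $c_N=c$, and $N$ is geometric with success probability $\lambda_c/\lambda$.

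The first technical step is to obtain the busy-period MGF $M_{B_{c'}}(s)$. By independent thinning, the preempting source-$c'$ arrivals during a $c'$-service form a Poisson process of rate $\theta\lambda_{c'}$, so the time $E\sim\mathrm{Exp}(\theta\lambda_{c'})$ until the next such preemption is independent of the current service time $U$. Either $U\le E$, in which case the busy period equals $U$; or $E<U$, in which case the packet is preempted at time $E$ and the busy period restarts with an i.i.d.\ copy $B'_{c'}$ independent of $(U,E)$. Taking MGFs yields
\begin{align*}
M_{B_{c'}}(s)=\mathbb{E}\!\left[e^{sU}\mathbf{1}(U\le E)\right]+\mathbb{E}\!\left[e^{sE}\mathbf{1}(E<U)\right]M_{B_{c'}}(s),
\end{align*}
and a direct integration using $P(E\ge u)=e^{-\theta\lambda_{c'}u}$ identifies the two expectations as $M_U(s-\theta\lambda_{c'})$ and $a'_{c'}$, respectively; solving the recursion gives $M_{B_{c'}}(s)=M_U(s-\theta\lambda_{c'})/(1-a'_{c'})$.

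I then assemble the cycle MGF by splitting on the outcome of a single cycle. Let $R(s):=\mathbb{E}[e^{s(I+B_C)}\mathbf{1}(C=c)]=\tfrac{\lambda_c}{\lambda-s}M_{B_c}(s)=a_c/(1-a'_c)$ denote the successful-cycle contribution, and $Q(s):=\mathbb{E}[e^{s(I+B_C)}\mathbf{1}(C\neq c)]=\sum_{c'\neq c}\tfrac{\lambda_{c'}}{\lambda-s}M_{B_{c'}}(s)=\sum_{c'\neq c}a_{c'}/(1-a'_{c'})$ denote the failed-cycle contribution. Using independence across cycles and summing the geometric series in the number of cycles gives
\begin{align*}
M_{Y_c}(s)=\sum_{n\ge 1}Q(s)^{n-1}R(s)=\frac{R(s)}{1-Q(s)},
\end{align*}
which simplifies to \eqref{mgfinterde1}; the geometric sum converges in a neighborhood of $s=0$ because $Q(0)<1$ whenever $\lambda_c>0$.

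The main obstacle I anticipate is cleanly justifying the two independence properties behind this argument: first, that successive cycles are genuinely i.i.d.\ and independent of the source-type sequence $\{c_k\}$, which requires the memoryless property of the superposed Poisson arrivals together with the observation that the length of a $c'$-busy period is unaffected by arrivals from other sources (since they are simply discarded); and second, that the distributional recursion for $B_{c'}$ within a cycle is valid, which relies on the strong Markov/memoryless property of the preempting Poisson process and on the i.i.d.\ service times across successive service attempts. Once these structural facts are secured, the closed form \eqref{mgfinterde1} follows from the substitutions above.
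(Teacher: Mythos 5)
Your proof is correct and reaches the stated formula, but it takes a genuinely different route from the paper's. The paper characterizes $Y_c$ via a $(C{+}2)$-state semi-Markov chain, computes the transition probabilities $p_{c'},\bar p_{c'},p'_{c'}$ and the sojourn-time MGFs $\mathbb{E}[e^{s\eta_{c'}}],\mathbb{E}[e^{s\bar\eta_{c'}}],\mathbb{E}[e^{s\eta'_{c'}}]$ separately, and then invokes a graph-theoretic transfer-function lemma (\cite[Sect.~6.4]{6Bixio2016}) to sum the weighted paths from $q_0$ to a virtual node $\bar q_0$ by solving a linear system. You instead decompose $Y_c$ directly as a geometric sum of i.i.d.\ idle-plus-busy cycles, obtain the busy-period MGF $M_{B_{c'}}(s)=M_U(s-\theta\lambda_{c'})/(1-a'_{c'})$ from a one-step distributional recursion (preemption before vs.\ after service completion), and then sum the geometric series over failed cycles. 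The two derivations are secretly computing identical quantities: your $M_{B_{c'}}(s)$ is exactly the paper's $\bar p_{c'}\mathbb{E}[e^{s\bar\eta_{c'}}]/(1-p'_{c'}\mathbb{E}[e^{s\eta'_{c'}}])$, your $\lambda_{c'}/(\lambda-s)$ is $p_{c'}\mathbb{E}[e^{s\eta_{c'}}]$, and your $R(s)/(1-Q(s))$ is precisely $H(\bar q_0)$. Your approach is more elementary and probabilistically transparent---no auxiliary chain, no path-counting lemma, only memorylessness of the superposed/thinned Poisson arrivals and i.i.d.\ services---and it makes the convergence condition $Q(0)=1-\lambda_c/\lambda<1$ immediate; the paper's transfer-function machinery, by contrast, is a more systematic recipe that would scale to semi-Markov structures too tangled to unroll by hand.
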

\begin{proof}
To calculate the MGF of the interdeparture time $Y_c$, i.e., $M_{Y_c}(s) = \mathbb{E}[e^{s Y_c}]$, we first characterize $Y_c$ using a semi-Markov chain. The semi-Markov chain, shown in Fig.~\ref{Semi-Chain_c}, represents the dynamics of the system occupancy states (denoted by $q$'s) and the transition probabilities (denoted by $p$'s) between different states with respect to the interdeparture time $Y_c$.

The $C+2$ states of the graph in Fig.~\ref{Semi-Chain_c}, i.e., $\{q_0, q_1, q_2, \dots, q_C, q'_0\}$, are described as follows. When a packet from source $c$ is successfully delivered to the sink, the system enters state $q_0$, waiting for a fresh packet from any source. State $q_{c'}$, $c' \in \mathcal{C}$, indicates that a packet from source $c'$ is in service. State $q'_0$ indicates that a packet from some source $c' \in \mathcal{C}_{-c}$, where $\mathcal{C}_{-c} = \mathcal{C} \setminus \{c\}$, has been successfully delivered to the sink and the server is waiting for a fresh packet from any source. From the graph, the interdeparture time $Y_c$ is obtained by characterizing the total time required to start from state $q_0$ and return to $q_0$.

Let $\bar{X}_c = \min_{c' \in \mathcal{C}_{-c}} X_{c'}$, where $X_{c'}$ is a random variable representing the interarrival time between any two consecutive packets of source $c'$. The transitions between the states of the graph in Fig.~\ref{Semi-Chain_c} are described as follows.

\begin{enumerate}

\item $q_0 \rightarrow q_{c'},~\forall c' \in \mathcal{C}$: The system is in state $q_0$ and a packet from source $c'$ arrives. This transition occurs if the interarrival time of the source $c'$ packet, $X_{c'}$, is shorter than the minimum interarrival time among all other sources, $\bar{X}_{c'}$. Therefore, the transition occurs with probability $p_{c'} = \mathrm{Pr}(X_{c'} < \bar{X}_{c'})$. We denote the sojourn time of the system in state $q_0$ before this transition by $\eta_{c'}$, which has the distribution $ \mathrm{Pr}(\eta_{c'} > t) = \mathrm{Pr}(X_{c'} > t \mid X_{c'} < \bar{X}_{c'}) $.

\item $q'_0 \rightarrow q_{c'},~\forall c' \in \mathcal{C}$: The transition probability and the distribution of the sojourn time of the system in state $q'_0$ are the same as those for the transition $q_0 \rightarrow q_{c'}$.

\item $q_c \rightarrow q_0$: The system is in state $q_c$, i.e., serving a packet from source $c$, and the packet completes service and is delivered to the sink. This transition occurs with probability $\bar{p}_c = \mathrm{Pr}(D_c)$, where $D_c$ denotes the event that a packet from source $c$ entering service is successfully delivered. We denote the sojourn time of the system in state $q_c$ before this transition by $\bar{\eta}_c$, which has the distribution $\mathrm{Pr}(\bar{\eta}_c > t) = \mathrm{Pr}(U > t \mid D_c)$.

\item $q_{c'} \rightarrow q'_0,~\forall c' \in \mathcal{C}_{-c}$: The system is in state $q_{c'}$, $c' \in \mathcal{C}_{-c}$, i.e., serving a packet from source $c'$, and the packet completes service and is delivered to the sink. This transition occurs with probability $\bar{p}_{c'} = \mathrm{Pr}(D_{c'})$. The sojourn time of the system in state $q_{c'}$ before this transition has the distribution $\mathrm{Pr}(\bar{\eta}_{c'} > t) = \mathrm{Pr}(U > t \mid D_{c'})$.

\item $q_{c'} \rightarrow q_{c'},~\forall c' \in \mathcal{C}$: The system is in state $q_{c'}$ and a fresh packet from source $c'$, possibly following several blocked and cleared packets of the same source, arrives and preempts the packet in service. Let $\bar{D}_{c'}$ denote the event that a source $c'$ packet in service is not delivered because it is preempted. This transition occurs with probability $p'_{c'} = \mathrm{Pr}(\bar{D}_{c'})$. Since $\bar{D}_{c'}$ is the complement of $D_{c'}$, we have $p'_{c'} = 1 - \mathrm{Pr}(D_{c'})$. We denote the sojourn time of the system in state $q_{c'}$ before this transition by $\eta'_{c'}$.

\end{enumerate}

\begin{figure}
\centering
\includegraphics[width=.6\linewidth,trim = 0mm 0mm 0mm 0mm,clip]{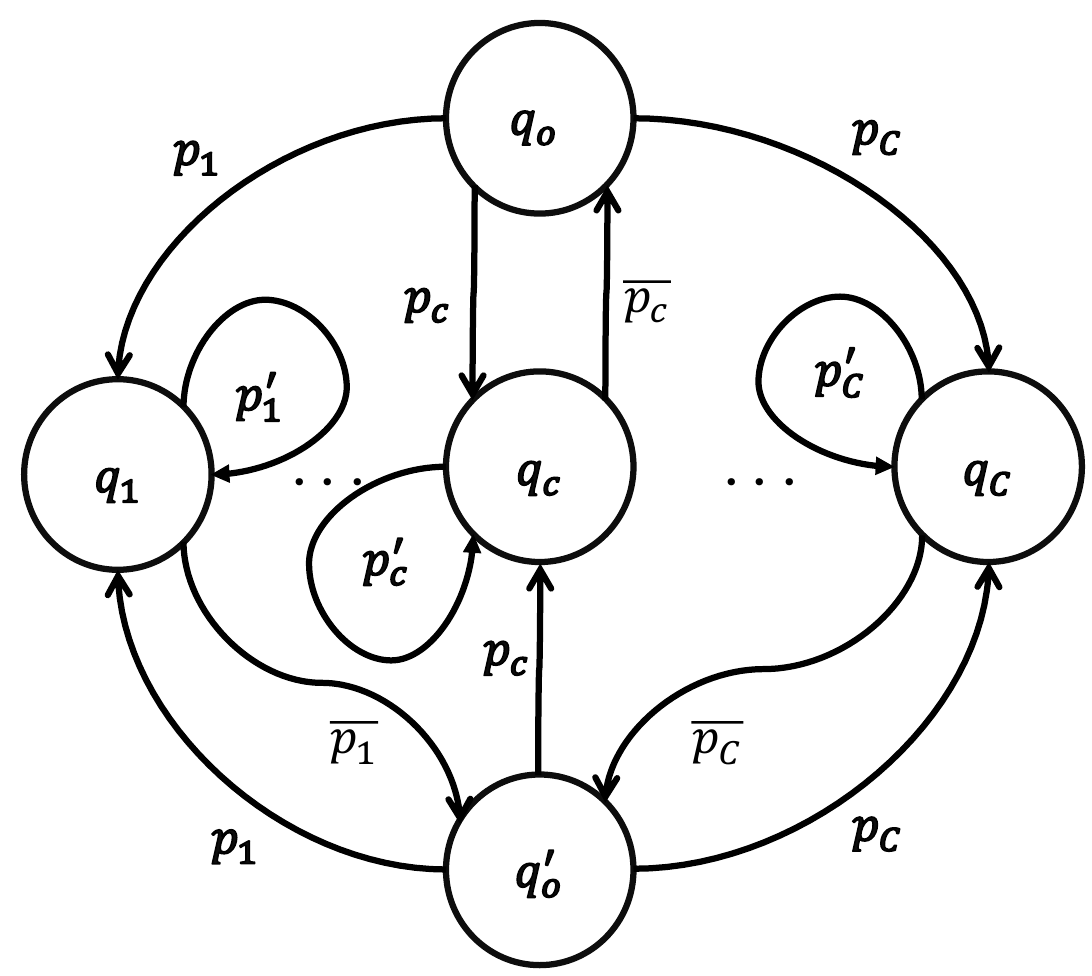}
\caption{The semi-Markov chain corresponding to the interdeparture time of two consecutive packets of source $c$.
}  
\label{Semi-Chain_c}
\vspace{-5mm}
\end{figure}

Next, we proceed to compute the transition probabilities and the corresponding sojourn time distributions for all states.

\begin{lemm}\label{prob01}
The transition probabilities $p_{c'}$, $p'_{c'}$, and $\bar p_{c'}$ for all ${c'}\in\mathcal{C}$ are characterized as follows:
\begin{align}
&p_{c'}=\dfrac{\lambda_{c'}}{\lambda},~
\bar p_{c'}=M_{U}(-\theta\lambda_{c'}),~
p'_{c'}=1-M_{U}(-\theta\lambda_{c'}).
\end{align}
\end{lemm}
\begin{proof}
See Section~\ref{Proof_LammaII}.
\end{proof}

The sojourn time distributions are calculated in the following lemma. 
\begin{lemm}\label{lemmfaap}
The PDFs of the random variables $\eta_{c'}$, $\bar\eta_{c'}$, and $\eta'_{c'}$ for all ${c'}\in\mathcal{C}$ are given as follows:
\begin{align}
&f_{\eta_{c'}}(t)=\lambda e^{-\lambda t},\\&\nonumber 
f_{\bar \eta_{c'}}(t)=\dfrac{f_U(t)e^{-\theta\lambda_{c'}t}}{M_{U}(-\theta\lambda_{c'})},\\&\nonumber
f_{\eta'_{c'}}(t)=\dfrac{\theta\lambda_{c'}e^{-\theta\lambda_{c'}t}(1-F_U(t))}{1-M_{U}(-\theta\lambda_{c'})}.
\end{align}
\end{lemm}
\begin{proof}
See Section~\ref{Proof_LammaII}.
\end{proof}

As shown in Fig.~\ref{Semi-Chain_c}, the interdeparture time between two consecutive delivered packets from source $c$ is equal to the total sojourn time experienced by the status update system from the moment it enters state $q_0$ until it returns to state $q_0$. 

The total sojourn time along this path consists of the sum of the individual sojourn times in each state for all possible paths $\{q_0, \ldots, q_0\}$. 

Thus, the random variable $Y_c$ can be characterized in terms of the random variables $\eta_{c'}$, $\bar{\eta}_{c'}$, and $\eta'_{c'}$ for all $c' \in \mathcal{C}$, i.e., the sojourn times in different states, and the number of times each occurs. Let $K_{c'}$, $\bar{K}_{c'}$, and $K'_{c'}$ denote the random variables representing the numbers of occurrences of $\eta_{c'}$, $\bar{\eta}_{c'}$, and $\eta'_{c'}$, respectively, during the interdeparture time $Y_c$. Consequently, $Y_c$ can be expressed as

\begin{equation}\label{Y_c_sojourn}
Y_c=\textstyle\sum_{c'\in\mathcal{C}}K_{c'}\eta_{c'}+\textstyle\sum_{{c'}\in\mathcal{C}}\bar K_{c'}\bar\eta_{c'}+\textstyle\sum_{{c'}\in\mathcal{C}}K'_{c'}\eta'_{c'}.
\end{equation} 

Using $Y_c$ defined in \eqref{Y_c_sojourn}, the MGF ${M}_{Y_c}(s) = \mathbb{E}[e^{sY_c}]$ can be calculated as in \eqref{mgfequ} (shown at the top of the next page),
\begin{figure*}[t!]
\begin{align}\label{mgfequ}
&{M}_{Y_c}(s)=\mathbb{E}[e^{sY_c}]=\mathbb{E}\big[\mathbb{E}[e^{sY_c}\mid (K_1,\cdots,K_C,\bar K_1,\cdots,\bar K_C,K'_1,\cdots,K'_C)=(k_1,\cdots,k_C,\bar k_1,\cdots,\bar k_C,k'_1,\cdots,k'_C)]\big]\\\nonumber
&=\sum_{k_1,\cdots,k_C,\bar k_1,\cdots,\bar k_C,k'_1,\cdots,k'_C}\!\!\!\!\!\!\mathbb{E}\big[e^{s(\sum_{{c'}\in\mathcal{C}}k_{c'}\eta_{c'}+\sum_{{c'}\in\mathcal{C}}\bar k_{c'}\bar\eta_{c'}+\sum_{{c'}\in\mathcal{C}}k'_{c'}\eta'_{c'})}\big]\\\nonumber
&\hspace{6mm}\mathrm{Pr}\big((K_1,\cdots,K_C,\bar K_1,\cdots,\bar K_C,K'_1,\cdots,K'_C)=(k_1,\cdots,k_C,\bar k_1,\cdots,\bar k_C,k'_1,\cdots,k'_C)\big)\\\nonumber
&\stackrel{(a)}{=}\sum_{k_1,\cdots,k_C,\bar k_1,\cdots,\bar k_C,k'_1,\cdots,k'_C}\prod_{c'=1}^{C}\mathbb{E}[e^{s\eta_{c'}}]^{k_{c'}}\prod_{{c'}=1}^{C}\mathbb{E}[e^{s\bar\eta_{c'}}]^{\bar k_{c'}}\prod_{{c'}=1}^{C}\mathbb{E}[e^{s\eta'_{c'}}]^{k'_{c'}}\\\nonumber
&\hspace{6mm}\prod_{{c'}=1}^{C}p_{c'}^{k_{c'}}\prod_{{c'}=1}^{C}\bar p_{c'}^{\bar k_{c'}}\prod_{{c'}=1}^{C}{p'_{c'}}^{k'_{c'}}
Q(k_1,\cdots,k_C,\bar k_1,\cdots,\bar k_C,k'_1,\cdots,k'_C),
\end{align}
\rule{\textwidth}{0.4pt}
\end{figure*}
where equality $ (a) $ follows because: i) the random variables $\eta_{c'}$, $\bar \eta_{c'}$, and $\eta'_{c'}$ for all ${c'\in\mathcal{C}}$ are independent, and ii) because of the independence of paths, $ \mathrm{Pr}\big((K_1,\cdots,K_C,\bar K_1,\cdots,\bar K_C,K'_1,\cdots,K'_C)=(k_1,\cdots,k_C,\bar k_1,\cdots,\bar k_C,k'_1,\cdots,k'_C)\big) $ 
equals the sum of the probabilities of all possible paths corresponding to the occurrence combination 
$ (k_1,\cdots,k_C,\bar k_1,\cdots,\bar k_C,k'_1,\cdots,k'_C) $, which is given as 
\begin{align}\nonumber
&\prod_{{c'}=1}^{C}p_{c'}^{k_{c'}}\prod_{{c'}=1}^{C}\bar p_{c'}^{\bar k_{c'}}\prod_{{c'}=1}^{C}{p'_{c'}}^{k'_{c'}}
\\& ~~~~~~~~~~...\times Q(k_1,\cdots,k_C,\bar k_1,\cdots,\bar k_C,k'_1,\cdots,k'_C), 
\end{align}
where $Q(k_1,\cdots,k_C,\bar k_1,\cdots,\bar k_C,k'_1,\cdots,k'_C)$
denote the number of paths  with the occurrence combination $ (k_1,\cdots,k_C,\bar k_1,\cdots,\bar k_C,k'_1,\cdots,k'_C) $. 

To derive $ {M}_{Y_c}(s)$ given in \eqref{mgfequ}, we need to calculate i) the values of $\mathbb{E}[e^{s\eta_{c'}}],~\mathbb{E}[e^{s\bar\eta_{c'}}]$, and $ \mathbb{E}[e^{s\eta_{c'}'}] $ for all ${c'}\in\mathcal{C}$, ii) the number of paths with the occurrence combination $ (k_1,\cdots,k_C,\bar k_1,\cdots,\bar k_C,k'_1,\cdots,k'_C)$, i.e.,  $Q(k_1,\cdots,k_C,\bar k_1,\cdots,\bar k_C,k'_1,\cdots,k'_C)$, and iii) the summation over the different occurrence combinations, which are carried out in the following. 

\begin{lemm}\label{rem01}
Using the PDFs presented in Lemma \ref{lemmfaap}, the expectations $\mathbb{E}[e^{s\eta_{c'}}]$, $\mathbb{E}[e^{s\bar{\eta}_{c'}}]$, and $\mathbb{E}[e^{s\eta'_{c'}}]$ are given by
\begin{align}\nonumber
&\mathbb{E}[e^{s\eta_{c'}}]=\dfrac{\lambda}{\lambda-s},\\&\nonumber
\mathbb{E}[e^{s\bar\eta_{c'}}]=\dfrac{M_U(s-\theta\lambda_{c'})}{M_{U}(-\theta\lambda_{c'})},\\&\label{MGfprob}
\mathbb{E}[e^{s\eta'_{c'}}]=\dfrac{\theta\lambda_{c'}(1-M_U(s-\theta\lambda_{c'}))}{(M_{U}(-\theta\lambda)-1)(s-\theta\lambda_{c'})}.
\end{align}
\end{lemm}

Next, we address Items ii and iii using the following lemma, which provides an effective tool from graph theory.

\begin{lemm}\label{lembro}
Consider a directed graph $G = (\mathcal{V}, \mathcal{E})$ consisting of a set $\mathcal{V}$ of $V$ nodes, a set $\mathcal{E}$ of $E$ edges, an algebraic label $e_{v' \rightarrow \bar v}$ on each edge $e \in \mathcal{E}$ from node $v'$ to node $\bar v$, and a node $u \in \mathcal{V}$ with no incoming edges. 
Let the transfer function $H(v)$ denote the weighted sum over all paths from $u$ to $v$, where the weight of each path is given by the product of its edge labels. Then, the transfer functions $H(v)$, $\forall v \in \mathcal{V}$, can be calculated by solving the following system of linear equations:
\begin{align}\label{hv}
\begin{cases}
H(u)=1\\
H(v)=\sum_{v'\in \mathcal{E}}e_{v'\rightarrow v}H(v'),& u\ne v.
\end{cases}
\end{align}
\end{lemm}
\begin{proof}
See \cite[Sect.~6.4]{6Bixio2016}.
\end{proof}

The main idea is to construct a directed graph $G = (\mathcal{V}, \mathcal{E})$ by defining its set of nodes $\mathcal{V}$, the directed edges $\mathcal{E}$ with weights $e_{v' \rightarrow \bar v}$, and the transfer functions of each node, $H(v)$, for all $v \in \mathcal{V}$, such that the expression on the right-hand side of $(a)$ in \eqref{mgfequ} is equal to the transfer function of a node $\bar{v} \in \mathcal{V}$, i.e., $H(\bar{v})$. In other words, we aim to form a directed graph $G = (\mathcal{V}, \mathcal{E})$ such that for some node $\bar{v} \in \mathcal{V}$, we have ${M}_{Y_c}(s) = H(\bar{v})$.

The design of the graph $G$ can be understood by noting its strong similarity to the structure of the directed graph in Fig.~\ref{Semi-Chain_c}, which we used to characterize the interdeparture time $Y_c$. The main difference is that all nodes in the directed graph shown in Fig.~\ref{Semi-Chain_c} have incoming edges. 

To create the node $u \in \mathcal{V}$ with no incoming edges, we remove the incoming links of node $q_0$ and introduce a virtual node $\bar{q}_0$ to represent the system state after completing the service of a source $c$ packet. 

By examining the factors representing the edge weights on the right-hand side of $(a)$ in \eqref{mgfequ}, we construct the directed graph $G$ shown in Fig.~\ref{Detour_c}. Thus, ${M}_{Y_c}(s)$ is given by the transfer function from node $q_0$ to node $\bar{q}_0$, i.e., $H(\bar{q}_0)$, so that ${M}_{Y_c}(s) = H(\bar{q}_0)$. Finally, the remaining task is to derive $H(\bar{q}_0)$ based on \eqref{hv}.
The system of linear equations in \eqref{hv} corresponding to the directed graph in Fig.~\ref{Detour_c} is formulated as
\begin{align}\nonumber
&H( q_0)=1,\\&\nonumber
H(\bar q_0)=\bar p_c\mathbb{E}[e^{s\bar \eta_c}]H(q_c),\\&\nonumber
H( q'_0)=\sum_{{c'}\in\mathcal{C}_{-c}}\bar p_{c'}\mathbb{E}[e^{s\bar \eta_{c'}}]H(q_{c'}),\\&\nonumber
H( q_{c'})=p_{c'}\mathbb{E}[e^{s\eta_{c'}}]H(q_0)+p'_{c'}\mathbb{E}[e^{s\eta'_{c'}}]H(q_{c'})\\&\label{hv0}~~~~~~~~~~~~~~~~~~~~~~~~~~~~~+ p_{c'}\mathbb{E}[e^{s\eta_{c'}}]H(q'_0),
\forall {c'}\in \mathcal{C}.
\end{align}
By solving the system of linear equations in \eqref{hv0}, $H(\bar q_0)$ is given by 
\begin{align}\label{barhq0}
&H(\bar q_0)=\\&\nonumber \dfrac{p_{c}{\bar p_{c}}\mathbb{E}[e^{s\eta_{c}}]\mathbb{E}[e^{s\bar\eta_{c}}]}{\big(1-p'_{c}\mathbb{E}[e^{s\eta'_{c}}]\big)\bigg(1-\sum_{c'\in\mathcal{C}_{-c}}\dfrac{p_{c'}{\bar p_{c'}}\mathbb{E}[e^{s\eta_{c'}}]\mathbb{E}[e^{s\bar\eta_{c'}}]}{1-p'_{c'}\mathbb{E}[e^{s\eta'_{c'}}]}\bigg)}.
\end{align}

Finally, substituting the probabilities $p_{c'}$, $p'_{c'}$, and $\bar p_{c'}$ (derived in  Lemma~\ref{prob01}) and the values of  $\mathbb{E}[e^{s\eta_{c'}}]$, $\mathbb{E}[e^{s\eta'_{c'}}]$, and $\mathbb{E}[e^{s\bar{\eta}_{c'}}]$ (derived in Lemma~\ref{rem01}) into \eqref{barhq0} results in ${M}_{Y_c}(s)$, which completes the proof of Proposition~\ref{Pro2}. 
\end{proof}

Finally, substituting $M_{T_c}(s)$ from Proposition~\ref{Pro1} and $M_{Y_c}(s)$ from Proposition~\ref{Pro2} into \eqref{MGFofagegeneral} yields the MGF of the AoI, ${M}_{\delta_c}(s)$, as stated in Theorem~\ref{T_source-aware}. Similarly, substituting $M_{T_c}(s)$ and $M_{Y_c}(s)$ into \eqref{MGFpeak} yields the MGF of the PAoI, ${M}_{A_c}(s)$, also given in Theorem~\ref{T_source-aware}.

\begin{figure}
\centering
\includegraphics[width=.95\linewidth,trim = 0mm 0mm 0mm 0mm,clip]{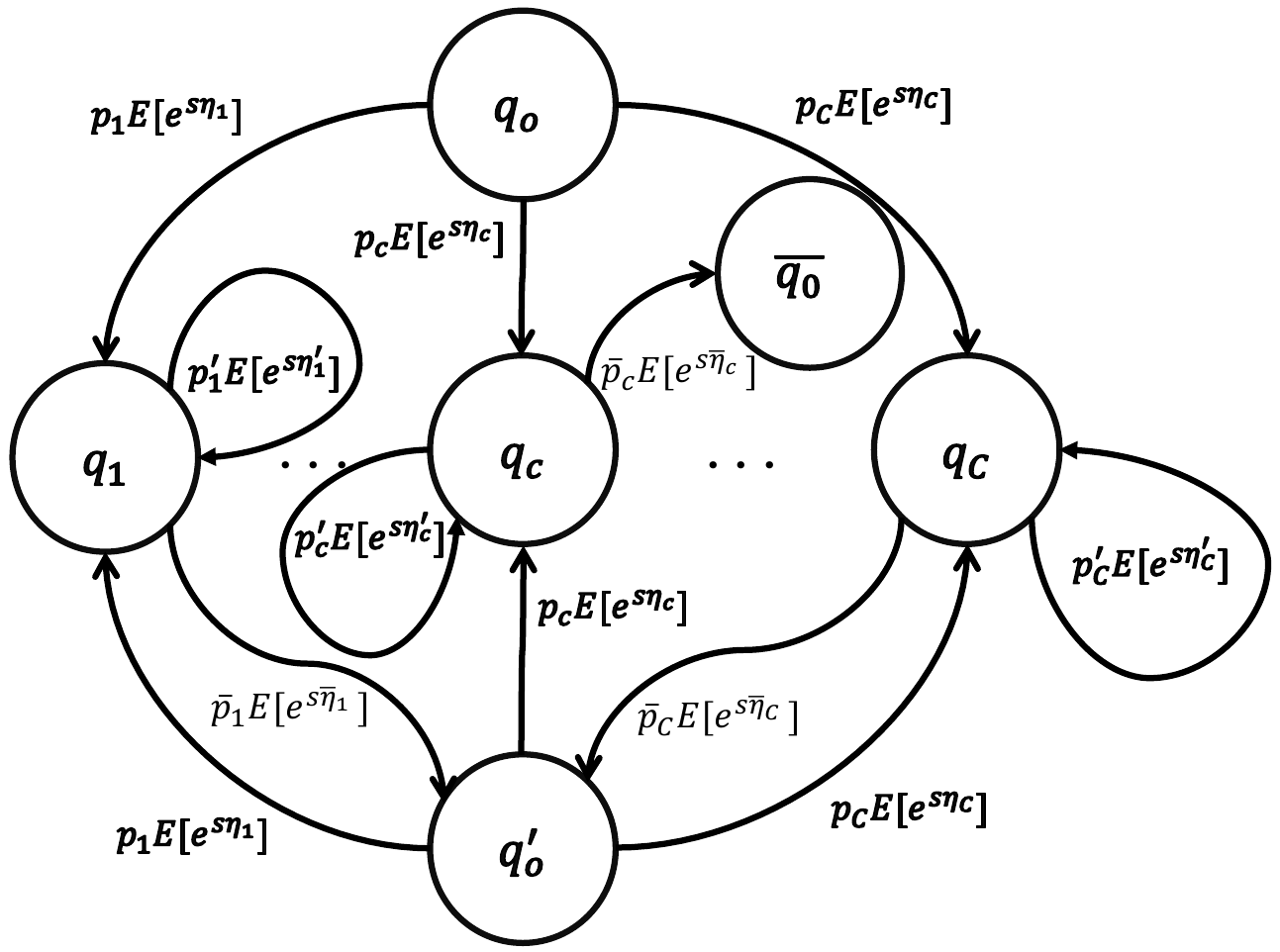}
\caption{The directed graph to calculate the MGF of the interdeparture time under the self-preemptive policy.}  
\label{Detour_c}
\vspace{-7mm}
\end{figure}


\section{Numerical Results}\label{Numerical Results}
In this section, we use Corollary~\ref{Cro_01} to derive the average AoI under the probabilistically preemptive policy in a two-source status update system and compare its performance against the self-preemptive \cite{9869867}, globally-preemptive \cite{8406928}, and non-preemptive \cite{9500775} policies in terms of the sum average AoI.

We assume that the service time $U$ follows a log-normal distribution with PDF
$$
f_U(t) = \frac{1}{t \, \omega \sqrt{2\pi}} \exp\Bigg(-\frac{(\ln t - \alpha)^2}{2 \omega^2}\Bigg), \quad t > 0,
$$
where $\alpha \in (-\infty, \infty)$ and $\omega > 0$. The expected service time is given by $\mathbb{E}[U] = \exp(\alpha + \omega^2/2)$.

In all figures, we set $\lambda = \lambda_1 + \lambda_2 = 8$, $\alpha = -1$, and $\omega = 1$, which gives $\mathbb{E}[U] = 1$.

Fig.~\ref{Versus_P} shows the sum average AoI under different policies as a function of the preemption probability $\theta$, for $\lambda_1 = 2$ (Fig.~\ref{SAoI_2_vsP}), $\lambda_1 = 4$ (Fig.~\ref{SAoI_4_vsP}), and $\lambda_1 = 5$ (Fig.~\ref{SAoI_5_vsP}).

Fig.~\ref{Versus_lambda} shows the sum average AoI difference ratio (in percent) under different policies as a function of the arrival rate of source one, $\lambda_1$, for $\theta = 0.2$ (Fig.~\ref{Ratio_P2_V_lambda}), $\theta = 0.6$ (Fig.~\ref{Ratio_P6_V_lambda}), and $\theta = 0.9$ (Fig.~\ref{Ratio_P9_V_lambda}). The sum average AoI difference ratio (in percent) for each policy is calculated as
\begin{align}
\dfrac{\Delta_1+\Delta_2\!-\!(\Delta^{\text{prob}}_1\!+\!\Delta^{\text{prob}}_2)}{\Delta^{\text{prob}}_1+\Delta^{\text{prob}}_2}\!\times\!100,
\end{align}
where $\Delta^{\text{prob}}_i$ is the average AoI of source $i$ under the probabilistically preemptive policy. From the figures, by properly choosing the preemption probability $\theta$, the sum average AoI can be significantly reduced. For example, when $\lambda_1 = 2$, setting $\theta = 0.28$ results in an improvement of approximately $18\%$ in the sum average AoI compared to the other three policies, as shown in Fig.~\ref{SAoI_2_vsP}.

\begin{figure}
\centering
\subfigure[$\lambda_1=2$]{
\includegraphics[width=0.4\textwidth]{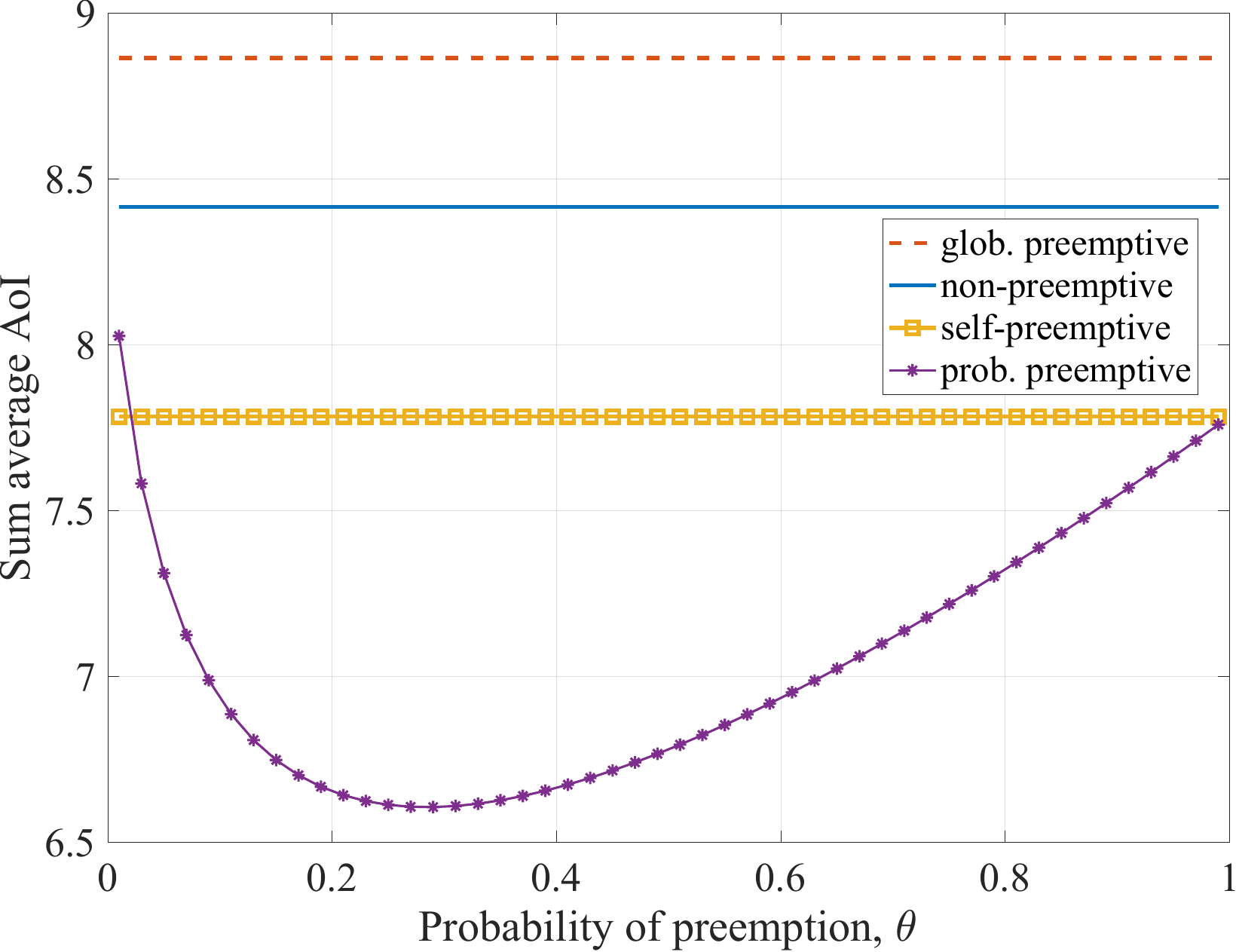}
\label{SAoI_2_vsP}
}
\subfigure[$\lambda_1=4$]
{
\includegraphics[width=0.4\textwidth]{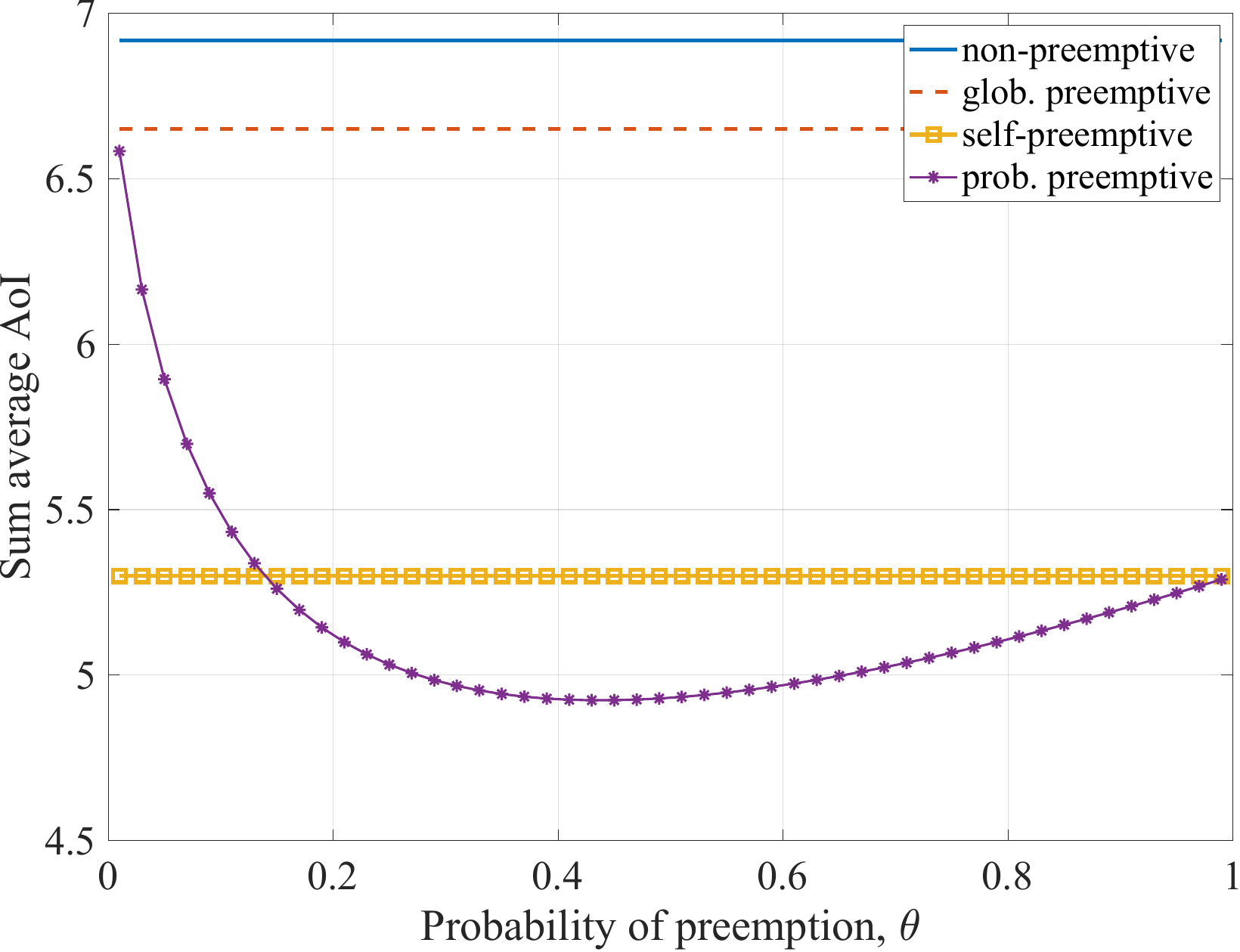}
\label{SAoI_4_vsP}
}
\subfigure[$\lambda_1=5$]{
\includegraphics[width=0.4\textwidth]{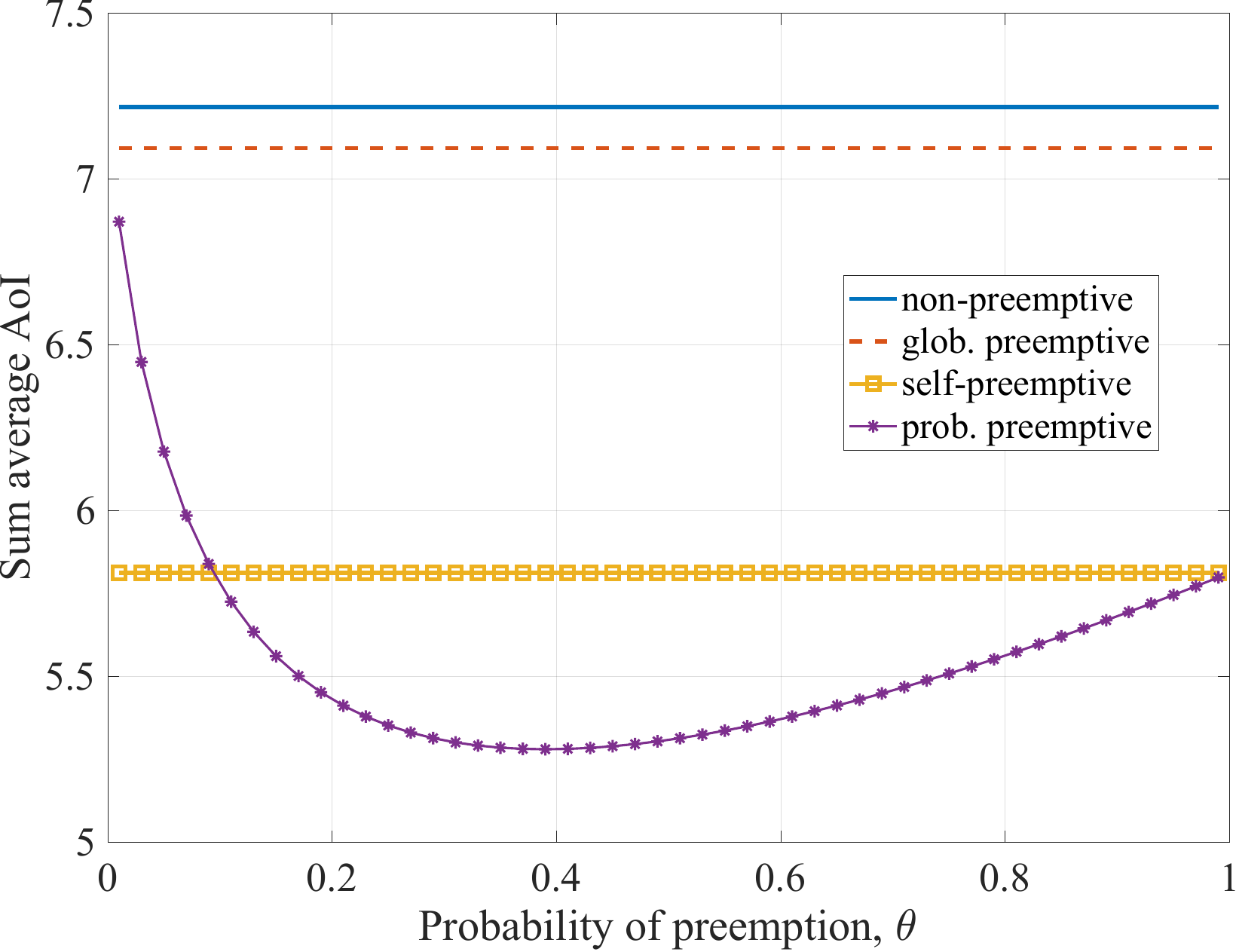}
\label{SAoI_5_vsP}
}
\caption{The sum average AoI of different policies as a function of the probability of preemption $\theta$.}
\label{Versus_P}
\vspace{-6mm}
\end{figure}

\begin{figure}
\centering
\subfigure[$\theta=0.2$]{
\includegraphics[width=0.4\textwidth]{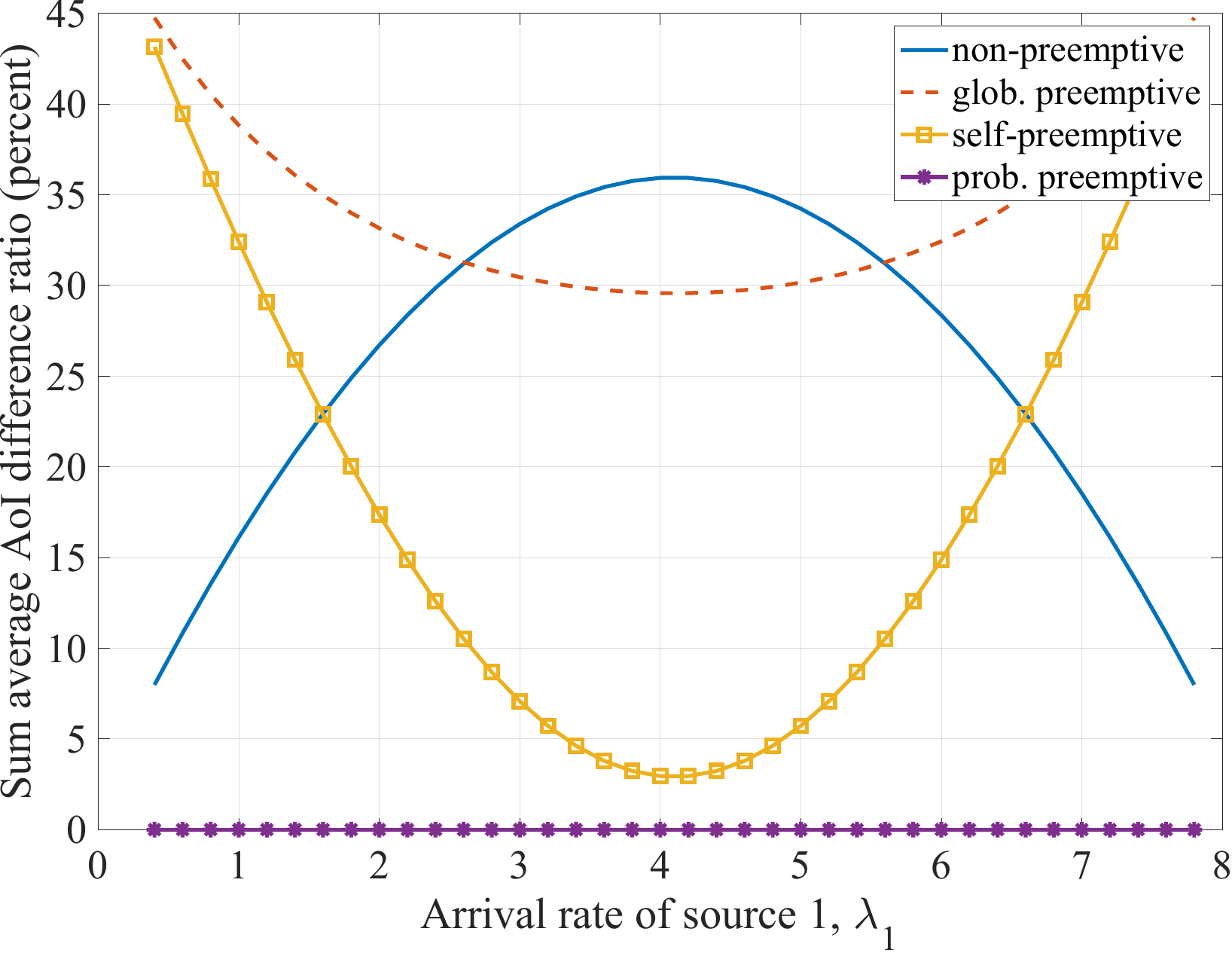}
\label{Ratio_P2_V_lambda}
}
\subfigure[$\theta=0.6$]
{
\includegraphics[width=0.4\textwidth]{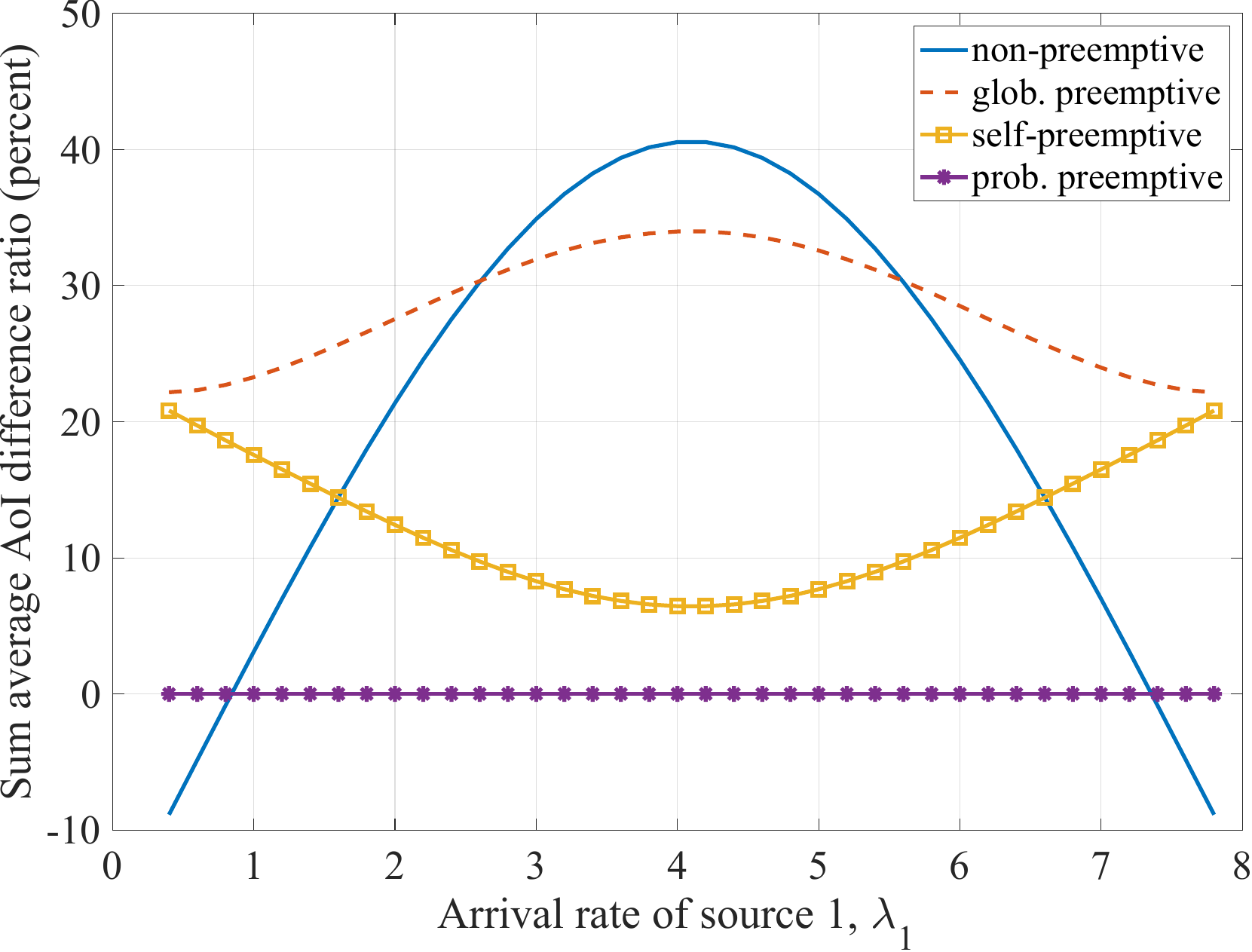}
\label{Ratio_P6_V_lambda}
}
\subfigure[$\theta=0.9$]{
\includegraphics[width=0.4\textwidth]{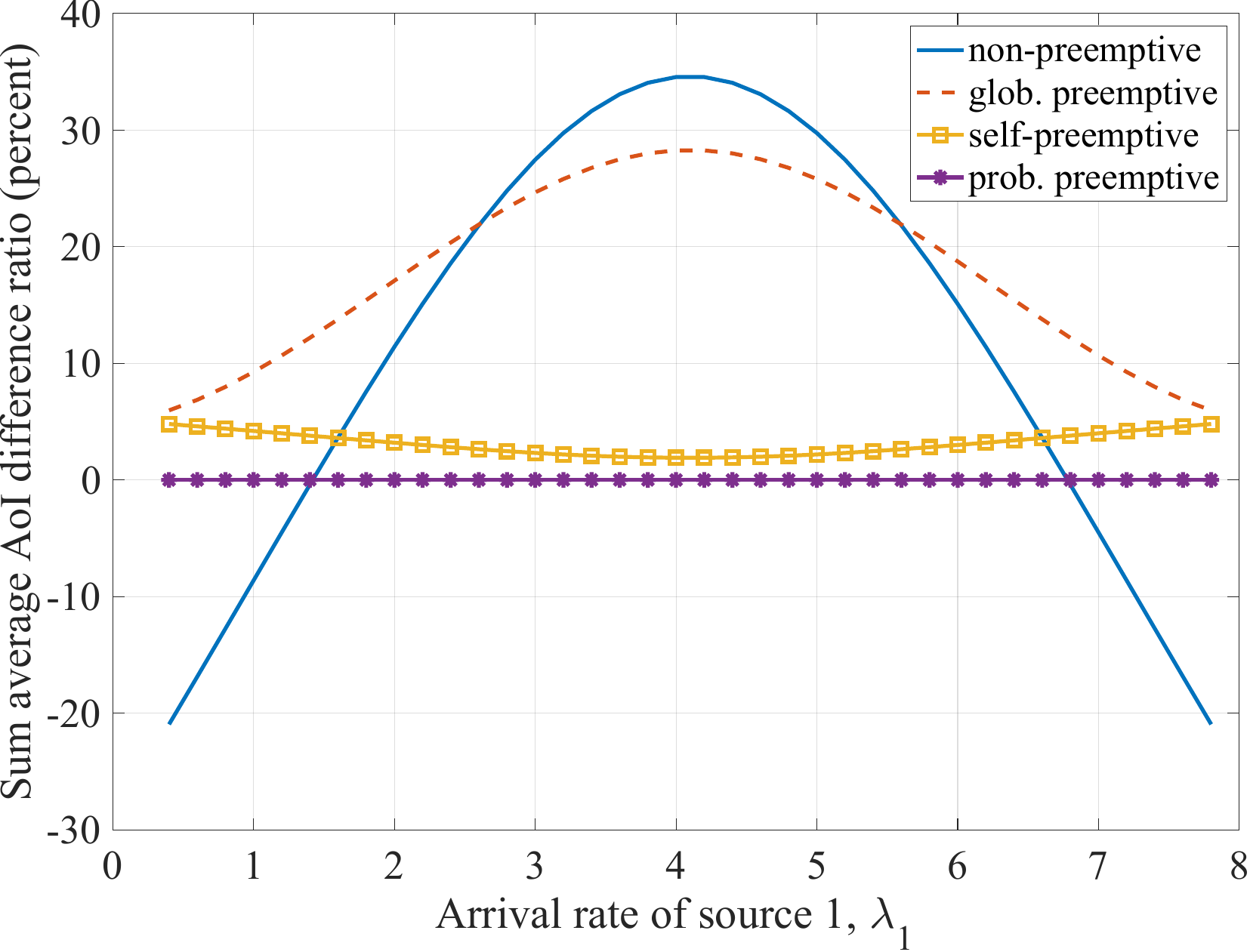}
\label{Ratio_P9_V_lambda}
}
\caption{The sum average AoI difference ratio (percent) for different policies as a function of the arrival rate of source one $\lambda_1$.}
\label{Versus_lambda}
\vspace{-6mm}
\end{figure}


\section{Conclusions}\label{Conclusions}
We studied a multi-source M/G/1/1 queueing system and derived the MGFs of the AoI and PAoI under the probabilistically preemptive policy. Using the AoI MGF, we analyzed the sum average AoI for a log-normal service time distribution. The results demonstrated that the probabilistically preemptive policy can substantially improve system performance compared to existing policies.

\section{Appendix}

\subsection{Proof of Corollary~\ref{Cro_01}}\label{ProofCro_01}
From Lemma~\ref{lemmsmgfage}, we have 
\begin{align}\label{Cro_01_eq_1}
    \bar{Y_{c}}sM_{\delta_{c}}(s)=M_{A_{c}}(s)-M_{T_{c}}(s).
\end{align}
By computing the $(m+1)$th derivative of both sides of \eqref{Cro_01_eq_1}, we obtain
\begin{align}\nonumber
    &\bar{Y_{c}}\left((m+1)M^{(m)}_{\delta_{c}}(s)+sM^{(m+1)}_{\delta_{c}}(s)\right)=\\&\label{Cro_01_eq_2} M^{(m+1)}_{A_{c}}(s)-M^{(m+1)}_{T_{c}}(s).
\end{align}
By taking the limit of both sides of \eqref{Cro_01_eq_2} as $s \to 0$, we obtain
\begin{align}\label{Cro_01_eq_3}
    (m+1)\bar{Y_{c}}\Delta_{c}^{(m)}=A_{c}^{(m+1)}-\E{T_{c}^{m+1}}.
\end{align}
Recall that the PAoI associated with packet $i$ is $A_{c,i} = Y_{c,i} + T_{c,i-1}$. Since there is no waiting buffer in the system, $T_{c,i-1}$ and $Y_{c,i}$ are independent. Therefore, $A_c^{(m)}$ can be expressed as
\begin{align}\label{Cro_01_eq_4}
A_c^{(m)}=\sum_{i=0}^{m}{m \choose i}\E{T_c^{i}}\E{Y_c^{m-i}}.
\end{align}
Finally, substituting \eqref{Cro_01_eq_4} into \eqref{Cro_01_eq_3} completes the proof of Corollary~\ref{Cro_01}.

\subsection{Proof of Proposition~\ref{Pro1}}\label{ProofPro1}
We first derive the PDF of the system time $T_c$, denoted by $f_{T_c}(t)$. Let $D_c$ denote the event that a packet from source $c$ entering service is successfully delivered. According to the packet management policy, the system time $T_c$ of a \textit{delivered packet} is equal to its service time. Thus, the distribution of $T_c$ satisfies
\begin{align}
&\mathrm{Pr}(t\le T_c<t+\epsilon)=\mathrm{Pr}(t\le U\le t+\epsilon\mid D_c)\nn
&\qquad=\frac{\mathrm{Pr}(t<U<t+\epsilon)\mathrm{Pr}(D_c\mid t<U<t+\epsilon)}{\mathrm{Pr}(D_c)}.
\end{align}
 Hence, $f_{T_c}(t)=\lim_{\epsilon\to0}\mathrm{Pr}(t\le T_c\le t+\epsilon)/\epsilon$ is calculated as 
 \begin{align}\label{Sys_time_G1}
  f_{T_c}(t)&= \dfrac{f_U(t)\mathrm{Pr}(D_c\mid U=t)}{\mathrm{Pr}(D_c)}.  
 \end{align}
Because packets of source $c$ arriving at a busy server (serving a packet from source $c$) are admitted into service as Bernoulli trials with probability $\theta$, the arrivals of preempting packets of source $c$, as observed by the busy server, form a thinned Poisson process with arrival rate $\theta \lambda_c$. Let the random variable $R_c$ denote the interarrival time of preempting packets of source $c$, i.e., the time elapsed between any two consecutive preempting packets. The event $D_c$ occurs if and only if this thinned arrival process has zero arrivals during the service period $U$, i.e., $R_c > U$. Thus,
\begin{align}\label{con_pro_eve_eq}
\mathrm{Pr}(D_c\mid U=t)&=\mathrm{Pr}(R_c>U\mid U=t)\nn&\stackrel{(a)}{=}e^{-\theta\lambda_c t},
\end{align}
where $(a)$ follows because $R_c$ follows the exponential distribution with rate $\theta\lambda_c$. This implies 
\begin{align}
\mathrm{Pr}(D_c)&=\int_{0}^{\infty}\mathrm{Pr}(D\mid U=t)f_U(t)\,dt\nn
&=\int_{0}^{\infty}e^{-\theta\lambda_c t}f_U(t)\,dt
=M_{U}(-\theta\lambda_c).
\label{con_pro_eve_eq1}
\end{align}
By substituting \eqref{con_pro_eve_eq} and \eqref{con_pro_eve_eq1} into \eqref{Sys_time_G1}, 
\begin{align}
    f_{T_c}(t)=\dfrac{f_U(t)e^{-\theta\lambda_c t}}{M_{U}(-\theta\lambda_c)},
    \label{f_t1_0}
\end{align}
and the claim follows since $M_{T_c}(s)=\mathbb{E}[e^{sT_c}]$. 

\subsection{Proof of Lemma~\ref{prob01}}\label{Proof_LammaII}
Since $\bar X_{c'}$ is the minimum of independent exponentially distributed random variables ${X_{j},~j\in\mathcal{C}_{-{c'}}}$, it follows an exponential distribution with parameter $\bar\lambda_{c'}=\sum_{j\in\mathcal{C}_{-{c'}}}\lambda_j$. Thus,
\begin{align}\nonumber
p_{c'}&=\mathrm{Pr}(X_{c'}<\bar X_{c'})
\\&\nonumber=\int_{0}^{\infty}\mathrm{Pr}(X_{c'}<\bar X_{c'}\mid \bar X_{c'}=t)f_{\bar X_{c'}}(t)\mathrm{d}t\\&
=\int_{0}^{\infty}(1-e^{-\lambda_{c'}t})\bar\lambda_{c'}e^{-\bar\lambda_{c'}t}\mathrm{d}t=\dfrac{\lambda_{c'}}{\lambda}.
\end{align}
The probability $\bar p_{c'}=\mathrm{Pr}(D_{c'})$ was calculated in \eqref{con_pro_eve_eq1}, and $p'_{c'}=1-\bar p_{c'}$.

\subsection{Proof of Lemma~\ref{lemmfaap}}\label{Proof_LammaII}

The random variable $\bar\eta_{c'}$ corresponds to the system time of source $c'$, whose distribution was derived in \eqref{f_t1_0}.
The PDF of the random variable $\eta_{c'}$ is given as
\begin{align}\nonumber
&f_{\eta_{c'}}(t)=\lim_{\epsilon\rightarrow 0}\dfrac{\mathrm{Pr}(t<\eta_{c'}<t+\epsilon)}{\epsilon}
\\&\nonumber\stackrel{}{=}\lim_{\epsilon\rightarrow 0}\dfrac{\mathrm{Pr}(t<X_{c'}<t+\epsilon\mid X_{c'}<\bar X_{c'})}{\epsilon}\\&\nonumber
=\lim_{\epsilon\rightarrow 0}\dfrac{\mathrm{Pr}(t<X_{c'}\!<\!t+\epsilon)\mathrm{Pr}(X_{c'}\!<\!\bar X_{c'}\mid t\!<\!X_{c'}\!<\!t+\epsilon)}{\epsilon\mathrm{Pr}(X_{c'}<\bar X_{c'})}\\&\label{f_A}
\stackrel{}{=}\dfrac{(1-F_{\bar X_{c'}}(t))f_{X_{c'}}(t)}{\mathrm{Pr}(X_{c'}<\bar X_{c'})}
\stackrel{}{=}\lambda e^{-\lambda t}.
\end{align}

Recall that source $c'$ packets arriving at a busy server (serving a source $c'$ packet) are admitted into service as Bernoulli trials with probability $\theta$. Thus, the arrival time $R_{c'}$ of a preempting packet, as observed by the busy server, is exponentially distributed with parameter $\theta\lambda_{c'}$. The event $\Dbar_{c'}$ occurs when a preemption takes place prior to the completion of the packet in service, i.e., when $U > R_{c'}$. Consequently, the PDF of $\eta'_{c'}$ is given by
\begin{align}\nonumber
&f_{\eta'_{c'}}(t)
\stackrel{}{=}\lim_{\epsilon\rightarrow 0}\dfrac{\mathrm{Pr}(t<R_{c'}<t+\epsilon\mid \Dbar_{c'})}{\epsilon}\\&\nonumber
=\lim_{\epsilon\rightarrow 0}\dfrac{\mathrm{Pr}(t<R_{c'}<t+\epsilon)\mathrm{Pr}(U>R_{c'}\mid t<R_{c'}<t+\epsilon)}{\epsilon\mathrm{Pr}(\Dbar_{c'})}\nn
&=\dfrac{\mathrm{Pr}(U>t)}{\mathrm{Pr}(\Dbar_{c'})}\lim_{\epsilon\rightarrow 0}\dfrac{\mathrm{Pr}(t<R_{c'}<t+\epsilon)}{\epsilon}\nn
&\label{f_t1}
=\dfrac{1-F_U(t)}{1-M_{U}(-\theta\lambda_{c'})}
f_{R_{c'}}(t).
\end{align}

Finally, by substituting the exponential  PDF of $R_{c'}$ (with rate $\theta\lambda_{c'}$), the PDF of $f_{\eta'_{c'}}(t)$ is verified.

\bibliographystyle{IEEEtran}
\bibliography{conf_short,jour_short,Bibliography}
\end{document}